\documentclass[11pt]{article}

\usepackage{amsmath,amssymb,amsthm}
\usepackage{booktabs}
\usepackage{array}
\usepackage{algorithm}
\usepackage[noend]{algpseudocode}
\usepackage{hyperref}
\usepackage{geometry}
\usepackage{url}
\usepackage{graphicx}
\newtheorem{lemma}{Lemma}
\newtheorem{remark}{Remark}

\DeclareMathOperator{\round}{round}
\DeclareMathOperator*{\argmin}{arg\,min}

\title{\bf A Faster Closest-Point Algorithm\\ for the $A_n^{*}$ Lattices}
\author{Yuriy A. Reznik\\[3pt]
        \normalsize Massachusetts Institute of Technology\\
        \normalsize \texttt{yreznik@mit.edu}}
\date{}

\begin{document}
\maketitle

\begin{abstract}
The dual root lattice $A_n^*$ is an important lattice in quantization,
coding, and estimation. It can be represented as the projection of the
integer lattice $\mathbb{Z}^{n+1}$ onto the $n$-dimensional hyperplane
whose coordinates sum to zero. This representation makes $A_n^*$ particularly
natural for quantizing simplex-constrained data, such as histograms and 
probability distributions.

This paper studies the closest-point problem for $A_n^*$: given a query
vector $\mathbf{y}$, find the lattice point $\mathbf{x}\in A_n^*$ minimizing
$\|\mathbf{y}-\mathbf{x}\|^2$. The fastest previously known method is the
linear-time algorithm of McKilliam, Clarkson, Smith, and Quinn (MCSQ), which
employs bucket sort as a core operation.

We present a faster linear-time algorithm. The key observation is that the
closest-point objective depends on the rounding residuals only through two
prefix aggregates: a count and a residual sum. Hence the elements inside
each bucket never need to be sorted, stored, or traversed. This replaces
the linked-list traversal and pointer chasing of MCSQ with a single
bucketing pass over two flat arrays with counting-sort-style accumulates. 
A scaled objective further makes most of the computation exact integer 
arithmetic, and when the input coordinates are rationals with a common 
denominator, for example, histograms or empirical distibutions, the entire 
algorithm becomes exact and integer-only.

Experiments on an Intel Core i9-13900H show speedups of about $1.8$ to
$3.0$ over MCSQ for $n=2,\dots,100$, with larger gains at higher
dimensions. The proposed algorithm is also noticeably faster than Conway 
and Sloan methods for other root lattices, including $A_n$, $D_n^*$, and $E_8$. 
An open-source implementation is available in the \texttt{fanstar} 
project~\cite{fanstar}.
\end{abstract}

\section{Introduction}

\subsection{The closest-point problem}

A lattice is a discrete set of points obtained by taking all integer
linear combinations of a set of basis vectors: if $B$ is a generator
matrix, then $L=\{B\mathbf{w}:\mathbf{w}\in\mathbb{Z}^n\}$.
Given a query vector $\mathbf{y}$, the \emph{closest-point problem}, also
called the nearest-lattice-point problem, is to find
\[
    \mathbf{x}^*
    =
    \argmin_{\mathbf{x}\in L}
    \|\mathbf{y}-\mathbf{x}\|^2 .
\]
In vector quantization, $\mathbf{x}^*$ is the minimum-distortion
reproduction point for $\mathbf{y}$. In communication over a Gaussian
channel, $\mathbf{x}^*$ is the maximum-likelihood decoded lattice point
\cite{ConwaySloane1982}.

This paper concerns the closest-point problem for the dual root lattice
$A_n^*$. We use the standard representation in which $A_n^*$ is embedded
in $\mathbb{R}^{n+1}$ but lies on the $n$-dimensional hyperplane
$H=\{\mathbf{x}\in\mathbb{R}^{n+1}:\mathbf{1}^{T}\mathbf{x}=0\}$,
where $\mathbf{1}$ is the all-ones vector. Throughout the paper we write
$N=n+1$ for the ambient dimension. Let
$Q = I-\mathbf{1}\mathbf{1}^{T}/N$
be the orthogonal projector onto $H$. In the normalization used here,
\[
    A_n^* = Q\mathbb{Z}^{N}
    =
    \left\{
        \mathbf{k}-\frac{\mathbf{1}^{T}\mathbf{k}}{N}\mathbf{1}
        :
        \mathbf{k}\in\mathbb{Z}^{N}
    \right\}.
\]
Thus every point of $A_n^*$ can be represented by an integer vector
$\mathbf{k}\in\mathbb{Z}^{N}$, followed by projection:
$\mathbf{x}=Q\mathbf{k}$. This representation is not unique, since
$Q(\mathbf{k}+c\mathbf{1})=Q\mathbf{k}$ for every integer $c$, but it is
very convenient algorithmically.

The input to the algorithms in this paper is a real vector
$\mathbf{y}\in\mathbb{R}^{N}$; the output is a lattice point
$\mathbf{x}=Q\mathbf{k}\in A_n^*$ for some integer representative
$\mathbf{k}$. Since all lattice points lie in $H$, an arbitrary input
$\mathbf{y}_0=Q\mathbf{y}_0+t\mathbf{1}$ may first be projected: for every
$\mathbf{x}\in A_n^*$,
$\|\mathbf{y}_0-\mathbf{x}\|^2=\|Q\mathbf{y}_0-\mathbf{x}\|^2+t^2N$,
and the second term is independent of $\mathbf{x}$. Therefore, unless
stated otherwise, we assume that the query vector has already been
projected and satisfies $\mathbf{1}^{T}\mathbf{y}=0$.

\subsection{Why $A_n^*$ is useful}

The lattice $A_n^*$ has several attractive geometric and computational
properties. In low dimensions it gives exceptionally efficient sphere
coverings; in fact, it gives the thinnest known covering in all dimensions
up to eight~\cite{ConwaySloaneBook}. It includes familiar low-dimensional
lattices as special cases: $A_2^*$ is equivalent to the hexagonal lattice,
and $A_3^*$ is equivalent to the body-centered cubic lattice.
The sum-zero embedding also makes $A_n^*$ natural for quantizing data with
a fixed-sum constraint, such as histograms, probability distributions, and
normalized feature vectors. Such constraints arise in source coding and
distribution quantization~\cite{Reznik2011,CVPR2010}, speech and audio
coding~\cite{Makhoul1975}, and modern machine-learning applications where
model parameters or activations are quantized under normalization
constraints~\cite{Savkin2025,OrdentlichPolyanskiy2026}. Efficient
closest-point algorithms are essential if such lattices are to be used in
practice.

\subsection{Prior closest-point algorithms for $A_n^*$}

Conway and Sloane~\cite{ConwaySloane1982} were the first to give a fast
closest-point algorithm for $A_n^*$. Their method decomposes $A_n^*$ into
cosets of the root lattice $A_n$ and quantizes to each coset. This gives an
$O(n^2\log n)$ algorithm, later improved to $O(n^2)$ using soft-decoding
techniques~\cite{ConwaySloane1986}.
Clarkson~\cite{Clarkson1999} gave an $O(n\log n)$ algorithm whose main
cost is sorting the fractional parts of the input coordinates. McKilliam,
Clarkson, and Quinn~\cite{MCQ} simplified this approach and also obtained
an $O(n\log n)$ algorithm, but with only one sort rather than two.
McKilliam, Clarkson, Smith, and Quinn~\cite{MCSQ} then observed that a
full sort is unnecessary. Their algorithm uses a one-pass bucket sort to
classify coordinates into $N=n+1$ threshold intervals, which yields an
$O(n)$ closest-point algorithm. We refer to this algorithm as MCSQ. It is
the fastest previously known method asymptotically.

\subsection{Contribution of this paper}

This paper presents a faster linear-time closest-point algorithm for
$A_n^*$. The improvement is not in asymptotic complexity: both MCSQ and the
proposed method are $O(n)$. Instead, the improvement comes from reducing
the constant factors and memory-access costs that dominate practical
runtime.

The key observation is that the closest-point objective does not require
the elements inside each bucket to be stored or traversed: it depends only
on the number of coordinates whose rounded value has been increased and on
the sum of the corresponding residuals. Therefore, instead of performing a
bucket sort with linked lists as in MCSQ, the proposed algorithm keeps only
two flat arrays per bucket---a count array and a residual-sum array---and
performs the minimization by a single prefix sweep over these arrays,
analogous to the prefix stage of counting sort, except that no sorted
output is ever materialized.

This removes the pointer-chasing memory accesses of MCSQ. The proposed
algorithm also eliminates an $O(n)$ squared-norm computation through a
reduced objective. Scaling this reduced objective by $N$ further removes
the per-candidate division by $N$ and makes its quadratic term---together
with the residual surplus $\alpha_0$ and the flip counts---exact integer
quantities for zero-sum inputs. In addition, the rounding and bucketing
passes are folded into a single pass, so that residuals are consumed as
they are produced and never stored, and the output is reconstructed by a
branchless threshold test. Finally, when the input coordinates are
rationals with a common denominator, as for histograms and empirical
types, the entire algorithm can be carried out in exact integer
arithmetic.

\subsection{Outline}

Section~\ref{sec:prelim} defines the candidate structure used by all
modern closest-point algorithms for $A_n^*$; Section~\ref{sec:mcsq}
reviews MCSQ; Section~\ref{sec:proposed} presents the proposed algorithm;
Sections~\ref{sec:complexity} and~\ref{sec:experiments} compare operation
counts and report experiments.

\section{Preliminaries}
\label{sec:prelim}

This section reviews the candidate structure behind the MCQ and MCSQ
algorithms. Let $\mathbf{y}\in H$ be the query vector. We seek
$\mathbf{x}^*=\argmin_{\mathbf{x}\in A_n^*}\|\mathbf{y}-\mathbf{x}\|^2$.
Since $A_n^*=Q\mathbb{Z}^{N}$, the problem is equivalent to finding an
integer representative $\mathbf{k}\in\mathbb{Z}^{N}$ such that
$Q\mathbf{k}$ is closest to $\mathbf{y}$.

A useful way to generate candidates is to round a shifted version of the
query vector. By $\round(x)$ we denote the nearest integer to a given real
number $x$. For a scalar shift $\lambda\in\mathbb{R}$, define
$f(\lambda)=\round(\mathbf{y}+\lambda\mathbf{1})$, with rounding applied
coordinatewise; then $Qf(\lambda)\in A_n^*$. Because
$f(\lambda+1)=f(\lambda)+\mathbf{1}$ and $Q\mathbf{1}=\mathbf{0}$, the
projected point is periodic, $Qf(\lambda+1)=Qf(\lambda)$, so it suffices
to consider $\lambda\in[0,1)$.

Let $\mathbf{k}_0=\round(\mathbf{y})$ and define the centered rounding
residuals
$\mathbf{z}_0=\mathbf{y}-\mathbf{k}_0$, $z_{0,j}\in[-\tfrac12,\tfrac12)$.
As $\lambda$ increases from $0$ to $1$, coordinate $j$ of $f(\lambda)$
increases by one exactly when $\lambda$ crosses
$\tau_j=\tfrac12-z_{0,j}$.
Thus the possible candidates are obtained from $\mathbf{k}_0$ by adding
one to coordinates in order of increasing threshold $\tau_j$. If the
thresholds are sorted, this gives a chain
$\mathbf{k}_0,\mathbf{k}_1,\dots,\mathbf{k}_n$,
where $\mathbf{k}_i$ differs from $\mathbf{k}_{i-1}$ by one coordinate
being increased by one, and one of the projected points $Q\mathbf{k}_i$ is
a closest lattice point.

For a candidate $\mathbf{k}_i$, define
$\mathbf{z}_i=\mathbf{y}-\mathbf{k}_i$. Since $\mathbf{y}\in H$, the
squared distance to the projected point can be written as
\[
    \|\mathbf{y}-Q\mathbf{k}_i\|^2
    =
    \mathbf{z}_i^{T}\mathbf{z}_i
    -
    \frac{(\mathbf{z}_i^{T}\mathbf{1})^2}{N}.
\]
For compactness define
$\alpha_i=\mathbf{z}_i^{T}\mathbf{1}$ and
$\beta_i=\mathbf{z}_i^{T}\mathbf{z}_i$.
Then the objective to minimize is
\begin{equation}
    d_i
    =
    \beta_i-\frac{\alpha_i^2}{N}.
    \label{eq:di}
\end{equation}
If candidate $i$ is obtained from candidate $i-1$ by increasing coordinate
$(i)$ by one, the residual in that coordinate decreases by one while all
others are unchanged, so
\begin{align}
    \alpha_i &= \alpha_{i-1}-1, \label{eq:alpharec}\\
    \beta_i  &= \beta_{i-1}-2z_{i-1,(i)}+1. \label{eq:betarec}
\end{align}
These recursions are the basis of the MCQ and MCSQ algorithms. The MCQ
algorithm sorts the thresholds $\tau_j$ and evaluates the candidates in
sorted order. MCSQ avoids the full sort using the following fact.

\begin{lemma}[\protect{\cite[Lemma~2]{MCSQ}}]
\label{lem:interval}
Let $Qf(\lambda_0)$ be a closest point of $A_n^*$ to $\mathbf{y}$, and let
$I$ be the maximal interval containing $\lambda_0$ on which
$f(\lambda)$ is constant. Then $|I|\ge 1/N$.
\end{lemma}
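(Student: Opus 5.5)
The plan is to translate both ``$Qf(\lambda_0)$ is closest'' and ``$|I|$'' into statements about the spread $\max_j w_j-\min_j w_j$ of one residual vector, and then compare the two.

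\textbf{Step 1: express $|I|$ through the spread.} Let $\mathbf{k}=f(\lambda_0)$ and set $\mathbf{w}=\mathbf{y}+\lambda_0\mathbf{1}-\mathbf{k}$. By the definition of rounding, every $w_j$ lies in $[-\tfrac12,\tfrac12)$. For $\lambda$ near $\lambda_0$ we have $f(\lambda)=\mathbf{k}$ exactly when every coordinate $w_j+(\lambda-\lambda_0)$ stays in the rounding cell $[-\tfrac12,\tfrac12)$. Hence $I$ is the interval of all $\lambda$ with
\[
-\tfrac12-\min_j w_j \;\le\; \lambda-\lambda_0 \;<\; \tfrac12-\max_j w_j ,
\]
and its length is
\[
|I| = 1-\Bigl(\max_j w_j-\min_j w_j\Bigr).
\]
The half-open endpoint convention does not change the length. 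So it suffices to show that the spread of $\mathbf{w}$ is at most $1-1/N$.

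\textbf{Step 2: pass to the projected residual.} Put $\mathbf{z}=\mathbf{y}-\mathbf{k}$. Since $\mathbf{y}\in H$, we have $\mathbf{y}-Q\mathbf{k}=Q\mathbf{z}$. Moreover $\mathbf{w}$ differs from $Q\mathbf{z}$ by a multiple of $\mathbf{1}$, namely $\mathbf{w}=Q\mathbf{z}+(\lambda_0+\mathbf{1}^T\mathbf{z}/N)\mathbf{1}$. Adding a constant vector does not change the spread, so $\mathbf{w}$ and $Q\mathbf{z}$ have the same spread.

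\textbf{Step 3: use optimality against the neighbours $\mathbf{k}\pm\mathbf{e}_j$.} Because $Q\mathbf{k}$ is a closest point, it beats both competitors $Q(\mathbf{k}\pm\mathbf{e}_j)\in A_n^*$. Expanding,
\[
\|\mathbf{y}-Q(\mathbf{k}\pm\mathbf{e}_j)\|^2=\|Q\mathbf{z}\|^2\mp 2(Q\mathbf{z})_j+\|Q\mathbf{e}_j\|^2 ,
\]
where we used $Q^TQ=Q$ and $\|Q\mathbf{e}_j\|^2=1-1/N$. Optimality therefore gives $|(Q\mathbf{z})_j|\le\tfrac12(1-1/N)$ for every $j$.

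\textbf{Conclusion.} Step 3 bounds the spread of $Q\mathbf{z}$, and hence of $\mathbf{w}$, by $1-1/N$. Substituting into Step 1 gives $|I|\ge 1/N$.

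The only delicate point is Step 1: correctly identifying $I$ as the set where every shifted coordinate stays in its current rounding cell, and checking that tie-breaking at the endpoints does not affect the length. The optimality inequality in Step 3 is routine once we choose the unit-vector competitors $\mathbf{k}\pm\mathbf{e}_j$.
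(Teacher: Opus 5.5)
Your proof is correct and complete. The paper gives no proof of this lemma; it imports it from MCSQ, so there is no in-text argument to match, and yours works as a self-contained substitute.

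The structure is sound. Step~1 reduces $|I|$ to $1-(\max_j w_j-\min_j w_j)$. Step~2 removes $\lambda_0$, because the spread is invariant under adding multiples of $\mathbf{1}$. Step~3 needs only the inequalities coming from the $2N$ competitors $Q(\mathbf{k}\pm\mathbf{e}_j)$, and these already give the sharp constant. Equality $|I|=1/N$ occurs when $\mathbf{y}-Q\mathbf{k}$ is a vertex of the Voronoi cell of $A_n^*$. That cell is a permutohedron, and its vertices are permutations of $\tfrac{1}{2N}(N-1,N-3,\dots,-(N-1))$, whose spread is exactly $1-1/N$.

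Three small remarks.

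\emph{Step~1 needs the equivalence globally.} The equivalence ``$f(\lambda)=\mathbf{k}$ iff $w_j+(\lambda-\lambda_0)\in[-\tfrac12,\tfrac12)$ for all $j$'' holds for every $\lambda$, not only for $\lambda$ near $\lambda_0$. You need the global version to identify $I$ with the whole level set $\{\lambda: f(\lambda)=\mathbf{k}\}$, which is an intersection of half-open intervals and hence an interval.

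\emph{Only two competitors matter.} These are $\mathbf{k}+\mathbf{e}_i$ with $w_i$ maximal and $\mathbf{k}-\mathbf{e}_j$ with $w_j$ minimal. When these extremes are unique, they are exactly the candidates $f(\lambda)$ immediately to the right and left of $I$. So the lemma amounts to saying that the optimal candidate beats its two neighbours on the chain.

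\emph{Your argument proves slightly more.} Steps~2--3 never use $\lambda_0$. Hence for \emph{any} integer representative $\mathbf{k}$ of a closest point, $\mathbf{y}-\mathbf{k}$ has spread at most $1-1/N<1$, so $f(\lambda)=\mathbf{k}$ on an interval of length at least $1/N$. This also yields the fact, asserted without proof in the Preliminaries, that some candidate $Qf(\lambda)$ is a closest point. The lemma as stated simply assumes this in its hypothesis.
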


Lemma~\ref{lem:interval} implies that the optimal candidate can be found
by testing only the grid values $\lambda=i/N$, $i=0,1,\dots,N-1$.
Equivalently, the thresholds $\tau_j$ need only be assigned to one of
$N$ intervals of length $1/N$; full sorting is unnecessary.

\section{The MCSQ algorithm}
\label{sec:mcsq}

The MCSQ algorithm improves on MCQ by replacing sorting with bucketing.
Coordinate $j$ changes its rounded value when
$\lambda=\tau_j=\tfrac12-z_j$. Instead of sorting the $N$ thresholds
exactly, MCSQ assigns each threshold to one of $N$ intervals of length
$1/N$; by Lemma~\ref{lem:interval}, this coarse ordering suffices to
identify a closest lattice point.

MCSQ stores the coordinates belonging to each bucket in a linked list and
sweeps the buckets in order. When bucket $i$ is reached, all coordinates in
it are treated as having flipped; the running quantities $\alpha$ and
$\beta$ are updated using \eqref{eq:alpharec}--\eqref{eq:betarec}, and the
objective $d_i=\beta-\alpha^2/N$ is evaluated. This yields an $O(n)$
algorithm because each coordinate is inserted into one bucket and later
visited a constant number of times. However, the linked-list representation
has a practical cost: the sweep performs dependent memory accesses, since
the next address to load is stored in the current list element, which the
processor cannot easily prefetch or parallelize. This pointer chasing is
one of the main costs removed by the proposed algorithm.

Algorithm~\ref{alg:mcsq} reproduces the MCSQ method in a form convenient
for comparison. The phases that build and traverse the linked lists are
its main practical costs. In particular, the reconstruction phase walks
the winning buckets' lists again to determine which coordinates of
$\mathbf{k}$ must be incremented. The proposed algorithm avoids these list
traversals entirely.

\begin{algorithm}[t]
\caption{MCSQ: closest point in $A_n^*$ via one-pass bucket sort~\cite{MCSQ}}
\label{alg:mcsq}
\begin{algorithmic}[1]
\Require $\mathbf{y}\in\mathbb{R}^{N}$, $N=n+1$
\State $\mathbf{z}\gets\mathbf{y}-\round(\mathbf{y})$
\State $\alpha\gets\mathbf{z}^{T}\mathbf{1}$;\quad
       $\beta\gets\mathbf{z}^{T}\mathbf{z}$
\State $\texttt{bucket}[i]\gets 0$ for $i=1,\dots,N$
\For{$t\gets 1$ \textbf{to} $N$}
       \Comment{assign coordinate $t$ to a threshold bucket}
    \State $i\gets \left\lceil N\left(\tfrac12-z_t\right)\right\rceil$
    \State $\texttt{link}[t]\gets\texttt{bucket}[i]$
    \State $\texttt{bucket}[i]\gets t$
\EndFor
\State $D\gets\beta-\alpha^2/N$;\quad $m\gets 0$
\For{$i\gets 1$ \textbf{to} $N$}
       \Comment{sweep buckets in increasing threshold order}
    \State $t\gets\texttt{bucket}[i]$
    \While{$t\ne 0$}
           \Comment{linked-list traversal}
        \State $\alpha\gets\alpha-1$
        \State $\beta\gets\beta-2z_t+1$
        \State $t\gets\texttt{link}[t]$
    \EndWhile
    \State $h\gets\beta-\alpha^2/N$
    \If{$h<D$}
        \State $D\gets h$;\quad $m\gets i$
    \EndIf
\EndFor
\State $\mathbf{k}\gets\round(\mathbf{y})$
       \Comment{reconstruct winning integer representative}
\For{$i\gets 1$ \textbf{to} $m$}
    \State $t\gets\texttt{bucket}[i]$
    \While{$t\ne 0$}
        \State $k_t\gets k_t+1$
        \State $t\gets\texttt{link}[t]$
    \EndWhile
\EndFor
\State $\mu\gets(\mathbf{1}^{T}\mathbf{k})/N$
\State \Return $\mathbf{k}-\mu\mathbf{1}$
\end{algorithmic}
\end{algorithm}

\section{The proposed algorithm}
\label{sec:proposed}

\subsection{Key observation}

The proposed algorithm starts from a simple observation: the objective does
not require knowing the identities of the coordinates inside each bucket
during the sweep. It only requires two aggregate quantities. For bucket
$i$, define its population and residual sum
\[
    c_i
    =
    \#\{j:\tau_j\text{ belongs to bucket }i\},
    \qquad
    \sigma_i=\sum_{j:\tau_j\text{ belongs to bucket }i} z_j ,
\]
and the corresponding prefix sums
\[
    C_i=\sum_{\ell=1}^{i}c_\ell,
    \qquad
    Z_i=\sum_{\ell=1}^{i}\sigma_\ell .
\]
Here $C_i$ is the number of coordinates that have flipped after buckets
$1,\dots,i$ are processed, and $Z_i$ is the sum of their residuals.

\begin{lemma}[additive decomposability]
\label{lem:decomp}
Starting from $\alpha_0=\mathbf{z}_0^{T}\mathbf{1}$ and
$\beta_0=\mathbf{z}_0^{T}\mathbf{z}_0$, the closest-point objective after
processing buckets $1,\dots,i$, reduced by the constant $\beta_0$, is
\begin{equation}
    h_i
    =
    W_i-\frac{(\alpha_0-C_i)^2}{N},
    \qquad
    W_i = C_i - 2Z_i .
    \label{eq:hi}
\end{equation}
Therefore no ordering of coordinates within buckets is required.
\end{lemma}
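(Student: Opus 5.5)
The plan is to work directly from the definitions rather than telescoping \eqref{eq:alpharec}--\eqref{eq:betarec} one coordinate at a time. Processing buckets $1,\dots,i$ means the integer representative is $\mathbf{k}_0+\mathbf{e}_S$, where $S$ is the set of coordinates $j$ whose threshold $\tau_j$ lies in buckets $1,\dots,i$, and $\mathbf{e}_S$ is the indicator vector of $S$. By definition $|S|=C_i$ and $\sum_{j\in S}z_{0,j}=Z_i$. The new residual vector is $\mathbf{z}=\mathbf{z}_0-\mathbf{e}_S$.

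First we compute $\alpha=\mathbf{z}^{T}\mathbf{1}=\alpha_0-|S|=\alpha_0-C_i$.

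Second, we expand the squared norm. Since each flipped coordinate changes from $z_{0,j}$ to $z_{0,j}-1$, we get
\[
\beta=\sum_{j\notin S}z_{0,j}^2+\sum_{j\in S}(z_{0,j}-1)^2=\beta_0-2Z_i+C_i .
\]
Substituting into \eqref{eq:di} gives $d=\beta_0+C_i-2Z_i-(\alpha_0-C_i)^2/N$. Subtracting the constant $\beta_0$ yields exactly \eqref{eq:hi} with $W_i=C_i-2Z_i$.

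Equivalently, one can sum the recursions \eqref{eq:alpharec}--\eqref{eq:betarec} over the $C_i$ flips. The key remark is that each flip of coordinate $j$ uses the residual $z_{0,j}$: a coordinate flips at most once for $\lambda\in[0,1)$, so its current residual before flipping is still the original one. The increments $-1$ and $-2z_{0,j}+1$ are therefore independent of the order of the flips.

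The final claim follows because $h_i$ depends on the set $S$ only through $C_i$ and $Z_i$, which are sums of the per-bucket aggregates $c_\ell$ and $\sigma_\ell$. Both are invariant under any permutation of coordinates within or across the processed buckets, so no intra-bucket ordering is needed. The only subtle point is the one just noted: confirming that within one period each coordinate is incremented at most once, so that $z_{i-1,(i)}=z_{0,(i)}$. This holds because the thresholds $\tau_j=\tfrac12-z_{0,j}$ lie in $(0,1]$, so each is crossed at most once as $\lambda$ ranges over $[0,1)$.
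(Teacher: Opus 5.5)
Your proof is correct and is essentially the paper's argument: the paper sums the per-flip updates ($\alpha$ decreases by $1$, $\beta$ increases by $1-2z_j$) over the coordinates in buckets $1,\dots,i$ to get $\alpha_i=\alpha_0-C_i$ and $\beta_i=\beta_0+C_i-2Z_i$, which is exactly what your direct expansion of $\|\mathbf{z}_0-\mathbf{e}_S\|^2$ computes. Your explicit remark that each coordinate flips at most once, so each update uses the original residual $z_{0,j}$, spells out a point the paper's proof uses without stating it.
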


\begin{proof}
Processing a flipped coordinate $j$ decreases $\alpha$ by one and increases
$\beta$ by $1-2z_j$. Summing these updates over all coordinates in buckets
$1,\dots,i$ gives $\alpha_i=\alpha_0-C_i$ and
$\beta_i=\beta_0+\sum(1-2z_j)=\beta_0+C_i-2Z_i$. Substituting into
$d_i=\beta_i-\alpha_i^2/N$ and dropping the constant $\beta_0$ gives the
result.
\end{proof}

Thus the algorithm needs only a count and a residual sum per bucket; it
never stores the elements of a bucket and never traverses a list. The
sweep is precisely the prefix-sum step of counting sort, except that the
prefix is consumed directly inside the minimization rather than used to
scatter a sorted array.

\subsection{A scaled, largely integer objective}
\label{sec:scaled}

Since $N>0$, minimizing $h_i$ in \eqref{eq:hi} is equivalent to minimizing
the scaled objective
\begin{equation}
    g_i
    =
    N h_i
    =
    \left(C_i - 2Z_i\right)N
    -
    \left(C_i-\alpha_0\right)^2 .
    \label{eq:gi}
\end{equation}
This form has two advantages. First, the per-candidate division by $N$
disappears. Second, for zero-sum inputs the surplus $\alpha_0$ is an exact
integer (see below), and the flip count $C_i$ is an integer by
construction, so the quadratic term $(C_i-\alpha_0)^2$ is evaluated in
exact integer arithmetic; the only real-valued quantity remaining in
\eqref{eq:gi} is $Z_i$, obtained by accumulating raw residuals $z_j$,
which is cheaper than accumulating the weights $1-2z_j$. The involved
integers are small: $|z_j|\le\tfrac12$ implies $|\alpha_0|\le N/2$, hence
$|C_i-\alpha_0|\le 3N/2$, so no overflow can occur for any practical $N$.
Note also that $g_0=-\alpha_0^2$, an exact integer.

\subsection{Proposed algorithm}
\label{sec:alg}

The proposed algorithm replaces the linked-list buckets of MCSQ with two
flat arrays: a counting-sort histogram $\texttt{bc}[i]=c_i$ and its
weighted analogue $\texttt{bz}[i]=\sigma_i$. Both are built in a single
pass folded together with the rounding step: for each coordinate, the
algorithm computes the rounded value $k_j$, the residual $z=y_j-k_j$, the
bucket index $b_j$, and the histogram updates, all at once. The residual is
consumed immediately and never stored, so no intermediate residual array is
needed.

After this pass, a prefix sweep over the buckets updates the cumulative
count and residual sum, evaluates the scaled objective \eqref{eq:gi}, and
records the best bucket index $m$. Finally, the winning candidate is
reconstructed and projected in a single branchless output pass:
\[
    x_j = k_j + [\,b_j\le m\,] - \mu,
\]
where $[\,P\,]$ denotes the Iverson bracket, equal to $1$ if $P$ holds and
$0$ otherwise. The comparison result is used directly as an arithmetic
operand, so the output pass contains no data-dependent branch.

The algorithm assumes $\mathbf{1}^{T}\mathbf{y}=0$. Under this assumption
the initial residual sum is computed exactly from the integer rounded
vector: with $\mathbf{k}=\round(\mathbf{y})$ and
$s=\mathbf{1}^{T}\mathbf{k}$,
\[
    \alpha_0
    =
    \mathbf{1}^{T}(\mathbf{y}-\mathbf{k})
    =
    -s,
\]
an exact integer. If the winning candidate flips $\phi$ coordinates, the
final integer representative has coordinate sum $s+\phi$, and the output
lattice point is $\mathbf{x}=\mathbf{k}-\frac{s+\phi}{N}\mathbf{1}$.
Algorithm~\ref{alg:proposed} gives the full method.

\begin{algorithm}[t]
\caption{Proposed closest-point algorithm for $A_n^*$.
The input satisfies $\mathbf{1}^{T}\mathbf{y}=0$.}
\label{alg:proposed}
\begin{algorithmic}[1]
\Require $\mathbf{y}\in\mathbb{R}^{N}$ with $\mathbf{1}^{T}\mathbf{y}=0$,
         $N=n+1$
\State $\texttt{bc}[i]\gets 0,\ \texttt{bz}[i]\gets 0$
       for $i=1,\dots,N$
\State $s\gets 0$
\For{$j\gets 1$ \textbf{to} $N$}
       \Comment{fused rounding and bucketing pass}
    \State $k_j\gets\round(y_j)$
    \State $z\gets y_j-k_j$
           \Comment{residual is consumed immediately, never stored}
    \State $s\gets s+k_j$
    \State $b_j\gets\left\lceil N\left(\tfrac12-z\right)\right\rceil$
    \State $\texttt{bc}[b_j]\gets\texttt{bc}[b_j]+1$
    \State $\texttt{bz}[b_j]\gets\texttt{bz}[b_j]+z$
\EndFor
\State $\alpha\gets -s$
       \Comment{exact integer residual sum}
\State $D\gets-\alpha^2$;\quad $m\gets 0$
       \Comment{$g_0$; exact integer}
\State $\mathrm{cnt}\gets 0$;\quad $\mathrm{acc}\gets 0$;\quad $\phi\gets 0$
\For{$i\gets 1$ \textbf{to} $N-1$}
       \Comment{prefix sweep; candidate $i=N$ duplicates $i=0$}
    \State $\mathrm{cnt}\gets\mathrm{cnt}+\texttt{bc}[i]$
    \State $\mathrm{acc}\gets\mathrm{acc}+\texttt{bz}[i]$
    \State $g\gets(\mathrm{cnt}-2\,\mathrm{acc})\,N-(\mathrm{cnt}-\alpha)^2$
    \If{$g<D$}
        \State $D\gets g$;\quad $m\gets i$;\quad $\phi\gets\mathrm{cnt}$
    \EndIf
\EndFor
\State $\mu\gets(s+\phi)/N$
\For{$j\gets 1$ \textbf{to} $N$}
       \Comment{branchless reconstruction fused with projection}
    \State $x_j\gets k_j+[\,b_j\le m\,]-\mu$
\EndFor
\State \Return $\mathbf{x}$
\end{algorithmic}
\end{algorithm}

\begin{remark}
No clamp is needed mathematically on line~7: residuals in
$[-\tfrac12,\tfrac12)$ give $N(\tfrac12-z)\in(0,N]$ and
$\lceil\cdot\rceil\in\{1,\dots,N\}$; a practical implementation may still
clamp to guard against floating-point rounding at the endpoints. The sweep
ends at $i=N-1$: bucket $N$ corresponds to flipping all $N$ coordinates,
i.e., to the representative $\mathbf{k}+\mathbf{1}$, and since
$Q(\mathbf{k}+\mathbf{1})=Q\mathbf{k}$ this candidate projects to the same
lattice point as $i=0$, already covered by the initialization
$D=-\alpha^2$. Consistently, coordinates in bucket $N$ are never
incremented on line~21, since $b_j\le m\le N-1$ excludes them.
\end{remark}

\subsection{Exact integer arithmetic for rational inputs}
\label{sec:rational}

The scaled objective \eqref{eq:gi} leaves only one real-valued quantity in
the entire algorithm: the residual prefix sum $Z_i$. This has a pleasant
consequence when the input is rational with a common denominator, which is
precisely the situation for histograms and empirical distributions
(``types'' in information theory): a histogram of $q$ samples over $N$
bins, shifted to the sum-zero hyperplane, has coordinates $y_j=u_j/q$ with
$u_j\in\mathbb{Z}$ and $\mathbf{1}^{T}\mathbf{u}=0$. In this case every
step of Algorithm~\ref{alg:proposed} can be carried out in exact integer
arithmetic. Rounding becomes an integer operation with residual numerators
$\zeta_j=u_j-q\,k_j\in[-q/2,\,q/2)$, so that $z_j=\zeta_j/q$; the bucket
index becomes an exact integer ceiling division,
$b_j=\lceil N(q-2\zeta_j)/(2q)\rceil$; the array $\texttt{bz}[\cdot]$
accumulates the integers $\zeta_j$, so that $\mathrm{acc}=qZ_i$ is an
integer; and the objective, scaled once more by $q$, is the exact integer
\[
    q\,g_i
    =
    \bigl(q\,C_i-2\,(qZ_i)\bigr)N
    -
    q\,(C_i-\alpha_0)^2 .
\]
All intermediate quantities are bounded by $O(qN^2)$ and fit comfortably in
64-bit integers for any practical $q$ and $N$. The algorithm therefore
becomes entirely free of floating-point operations, its result is exact,
and its behavior is bit-reproducible across platforms---a useful property
for interoperable codecs and for hardware implementations. Only the final
projection by $\mu=(s+\phi)/N$ involves a division, and the unnormalized
integer representative $(\mathbf{k},\phi)$ may be returned instead when an
exact output is desired.

\section{Complexity comparison}
\label{sec:complexity}

Both MCSQ and the proposed algorithm have linear asymptotic complexity;
the difference is in the constant factors and the memory-access pattern.
MCSQ stores each bucket as a linked list and traverses these lists during
the objective sweep and again during reconstruction. These traversals
involve dependent memory accesses: the address of the next element is not
known until the current element has been loaded. Such pointer chasing is
latency-bound and difficult for modern processors to prefetch. The
proposed algorithm stores only two flat arrays of length $N$, sweeps them
sequentially, and reconstructs using the stored bucket index $b_j$ and a
threshold test $b_j\le m$ whose result is consumed arithmetically rather
than by a branch.

The proposed algorithm also reduces and restructures the arithmetic: it
rounds each input coordinate once, whereas MCSQ rounds again during
reconstruction; it avoids computing the initial squared norm
$\beta_0=\mathbf{z}^{T}\mathbf{z}$, which is constant over all candidates;
it accumulates raw residuals $z_j$ instead of the weights $1-2z_j$; it
evaluates the scaled objective \eqref{eq:gi}, which removes the
per-candidate division by $N$ and computes the quadratic term in exact
integer arithmetic; it folds the rounding and bucketing steps into one
pass, never materializing the residual vector $\mathbf{z}$; and its output
pass is branchless, avoiding a data-dependent branch that would be
mispredicted roughly half the time.

Table~\ref{tab:counts} summarizes the main operation counts, with
floating-point and integer operations listed separately. The proposed
algorithm halves the number of rounding operations, removes two
floating-point additions and two floating-point multiplications per
coordinate, and shifts part of the remaining work to cheap, exact integer
operations. Most importantly, it eliminates all dependent pointer-chasing
lookups; the additional direct lookups are independent indexed accesses
into small flat arrays and are much more favorable on modern memory
hierarchies.

\begin{table}[t]
\centering
\caption{Worst-case operation counts, as multiples of $N=n+1$, for MCSQ
(Algorithm~\ref{alg:mcsq}) and the proposed algorithm
(Algorithm~\ref{alg:proposed}). Counts cover all phases; doubling
($2z$, $2\,\mathrm{acc}$) is counted as an addition. Both algorithms
additionally perform $N$ ceiling operations for bucket selection. Lookups
count indexed accesses to the auxiliary arrays, with a read-modify-write
counted once.}
\label{tab:counts}
\begin{tabular}{lcc}
\toprule
Operation & MCSQ & Proposed \\
\midrule
\multicolumn{3}{l}{\emph{Floating-point operations}}\\
\quad Roundings                & $2N$  & $N$   \\
\quad Additions / subtractions & $10N$ & $8N$  \\
\quad Multiplications          & $4N$  & $2N$  \\
\midrule
\multicolumn{3}{l}{\emph{Integer operations}}\\
\quad Additions / subtractions & $2N$  & $5N$  \\
\quad Multiplications          & $0$   & $N$   \\
\midrule
Comparisons                    & $N$   & $2N$  \\
Lookups: direct                & $5N$  & $6N$  \\
Lookups: \textbf{dependent}    & $6N$  & $\mathbf{0}$ \\
\bottomrule
\end{tabular}
\end{table}

\section{Experimental results}
\label{sec:experiments}

\subsection{Setup}

All results below were obtained using the implementations in the
open-source project \texttt{fanstar}~\cite{fanstar}. For each dimension
$n$, we generated $10^5$ random input vectors and mapped them into the
ambient $(n+1)$-dimensional sum-zero hyperplane by a norm-preserving
transform. We timed five $A_n^*$ closest-point implementations: 
CS~\cite{ConwaySloane1982}, VC~\cite{Clarkson1999},
MCQ~\cite{MCQ}, MCSQ~\cite{MCSQ}, and the proposed algorithm, for $n$ ranging over
$2,\dots,100$. For $n=2,\dots,8$ the proposed algorithm uses fully
unrolled implementations, switching to a generic-$n$ implementation for
$n\ge 9$. For reference and to gauge the competitiveness of $A_n^*$
against other root lattices, we also timed the Conway--Sloane
closest-point algorithms for $Z^n$, $A_n$, $D_n$, $D_n^*$, and
$E_8$~\cite{ConwaySloane1982}.

All algorithms are implemented as portable, single-threaded C89 code. The
compiler was instructed to optimize for speed but with advanced parallel
instruction sets disabled, i.e., no SSE, AVX, or similar extensions. Thus
the numbers reflect scalar performance. Timings were measured on a PC with
a 13th-generation Intel Core i9-13900H, 2.60~GHz CPU.
Table~\ref{tab:times} reports execution time in seconds per $10^5$
iterations.

\begin{table*}[t]
\centering
\caption{Execution times, in seconds per $10^5$ iterations, for different
closest-point implementations. The proposed algorithm is shown in the
``$A_n^*$: Proposed'' column.}
\label{tab:times}
\small
\begin{tabular}{r rrrrr rrrrr}
\toprule
 & \multicolumn{5}{c}{$A_n^*$ implementations}
 & \multicolumn{5}{c}{CS for other lattices} \\
\cmidrule(lr){2-6}\cmidrule(lr){7-11}
$n$ & CS & VC & MCQ & MCSQ & \textbf{Proposed}
    & $Z^n$ & $A_n$ & $D_n$ & $D_n^*$ & $E_8$ \\
\midrule
2	& 0.016	& 0.007	& 0.006	& 0.007	& \textbf{0.003} & 0.003 & 0.006 & 0.005 & 0.006         \\
3	& 0.029	& 0.006	& 0.007	& 0.007	& \textbf{0.003} & 0.003 & 0.006 & 0.005 & 0.006         \\
4	& 0.044	& 0.009	& 0.010	& 0.009	& \textbf{0.005} & 0.004 & 0.008 & 0.006 & 0.008         \\
5	& 0.060	& 0.014	& 0.010	& 0.011	& \textbf{0.005} & 0.005 & 0.009 & 0.008 & 0.009         \\
6	& 0.083	& 0.018	& 0.013	& 0.014	& \textbf{0.007} & 0.005 & 0.011 & 0.007 & 0.008         \\
7	& 0.103	& 0.021	& 0.015	& 0.016	& \textbf{0.006} & 0.006 & 0.013 & 0.009 & 0.009         \\
8	& 0.130	& 0.024	& 0.018	& 0.018	& \textbf{0.007} & 0.006 & 0.014 & 0.009 & 0.012 & 0.015 \\
9	& 0.155	& 0.029	& 0.019	& 0.020	& \textbf{0.009} & 0.008 & 0.016 & 0.010 & 0.014         \\
13	& 0.282	& 0.041	& 0.028	& 0.026	& \textbf{0.011} & 0.009 & 0.023 & 0.013 & 0.017         \\
18	& 0.453	& 0.058	& 0.038	& 0.033	& \textbf{0.013} & 0.011 & 0.031 & 0.016 & 0.023         \\
25	& 0.856	& 0.085	& 0.054	& 0.044	& \textbf{0.016} & 0.014 & 0.046 & 0.018 & 0.031         \\
35	& 1.700	& 0.132	& 0.077	& 0.059	& \textbf{0.022} & 0.020 & 0.066 & 0.024 & 0.041         \\
50	& 3.790	& 0.216	& 0.120	& 0.087	& \textbf{0.029} & 0.026 & 0.103 & 0.031 & 0.054         \\
72	& 9.110	& 0.350	& 0.191	& 0.121	& \textbf{0.042} & 0.036 & 0.158 & 0.044 & 0.076         \\
100	& 20.065& 0.545	& 0.290	& 0.165	& \textbf{0.056} & 0.048 & 0.243 & 0.058 & 0.104         \\
\bottomrule
\end{tabular}
\end{table*}

\subsection{Discussion}

In terms of runtime, the proposed method outperforms the fastest previous
$A_n^*$ algorithm, MCSQ, by approximately a factor of $1.8$ to $3.0$.
The larger speedups occur at higher dimensions, consistent with the
elimination of pointer-chasing loads whose relative cost grows as the
working set becomes less cache-local. Against the earlier CS, VC, and MCQ
methods the advantage is far larger: at $n=100$ the proposed algorithm is
roughly $5\times$ faster than MCQ, about $10\times$ faster than VC, and 
more than two orders of magnitude faster than CS.

Two comparisons against other lattice families are worth highlighting.
First, with the proposed improvements in place, $A_n^*$ becomes highly
competitive with the other root lattices commonly used as quantizers: at
large $n$ the proposed algorithm is faster than the CS closest-point
routines for $A_n$ and $D_n^*$ lattice, and about as fast as CS routines 
for $D_n*$ lattice. For $n=8$, the proposed $A_8^*$ implementation is also 
faster than the CS closest-point routine for $E_8$. The only lattice that 
is clearly faster is $Z^n$, which represents independent scalar quantizers. 

The above comparison is between the complexity of nearest-point routines, 
not between the geometric merits of the lattices themselves. Nevertheless, 
it suggests that the entire $A_n^*$ family deserves renewed attention in 
practical quantization and coding systems.

\section{Conclusions}

We presented a faster linear-time closest-point algorithm for the dual root
lattice $A_n^*$, based on the observation that the candidate objective
depends on the rounded residuals only through two prefix aggregates: a
count and a residual sum. Compared with the previously fastest MCSQ
algorithm, the proposed method replaces linked-list bucket traversal with
flat count and residual-sum arrays, eliminating pointer chasing, reducing
the rounding operations, and avoiding the initial squared-norm
computation. The reduced objective is scaled so that its quadratic term,
the residual surplus, and the flip counts are exact integer quantities for
sum-zero inputs; the rounding and bucketing steps are folded into a single
pass with no intermediate residual array; and the output is reconstructed
by a branchless threshold test fused with the final projection.

As an additional benefit, when the input coordinates are rationals with 
a common denominator, such as for histograms and empirical distributions,
the entire algorithm runs in exact integer arithmetic, producing exact, 
bit-reproducible results without a floating-point unit.

Experiments for dimensions $n=2,\dots,100$ show speedups of roughly
$1.8$ to $3.0$ over MCSQ, with larger gains at higher dimensions. With the
proposed improvements in place, finding closest points in $A_n^*$ becomes
faster than in other root lattices, such as $A_n$, $D_n^*$, and $E_8$. 
These results suggest that $A_n^*$ deserves renewed attention as a practical 
lattice for quantization, coding, and estimation problems, especially when 
the data naturally lie on or near a simplex. The implementation is available 
in the open-source project \texttt{fanstar}~\cite{fanstar}.


\begin{thebibliography}{9}

\bibitem{ConwaySloane1982}
J.~H. Conway and N.~J.~A. Sloane,
``Fast quantizing and decoding algorithms for lattice quantizers and
codes,''
\emph{IEEE Trans. Inf. Theory}, vol.~28, no.~2, pp.~227--232, Mar.~1982.

\bibitem{ConwaySloane1986}
J.~H. Conway and N.~J.~A. Sloane,
``Soft decoding techniques for codes and lattices, including the Golay
code and the Leech lattice,''
\emph{IEEE Trans. Inf. Theory}, vol.~32, no.~1, pp.~41--50, Jan.~1986.

\bibitem{ConwaySloaneBook}
J.~H. Conway and N.~J.~A. Sloane,
\emph{Sphere Packings, Lattices and Groups}, 3rd ed.
New York: Springer-Verlag, 1998.

\bibitem{Clarkson1999}
I.~V.~L. Clarkson,
``An algorithm to compute a nearest point in the lattice $A_n^{*}$,''
in \emph{Applied Algebra, Algebraic Algorithms and Error-Correcting
Codes} (Lecture Notes in Computer Science), vol.~1719.
Springer, 1999, pp.~104--120.

\bibitem{MCQ}
R.~G. McKilliam, I.~V.~L. Clarkson, and B.~G. Quinn,
``An algorithm to compute the nearest point in the lattice $A_n^{*}$,''
\emph{IEEE Trans. Inf. Theory}, vol.~54, no.~9, pp.~4378--4381,
Sep.~2008.

\bibitem{MCSQ}
R.~G. McKilliam, I.~V.~L. Clarkson, W.~D. Smith, and B.~G. Quinn,
``A linear-time nearest point algorithm for the lattice $A_n^{*}$,''
in \emph{Proc. Int. Symp. Information Theory and its Applications
(ISITA'08)}, Auckland, New Zealand, Dec.~2008, pp.~1--5.

\bibitem{Reznik2011}
Y.~Reznik,
``An algorithm for quantization of discrete probability distributions,''
in \emph{Proc. Data Compression Conference (DCC'11)},
Snowbird, UT, Mar.~2011, pp.~333--343.

\bibitem{CVPR2010}
V. Chandrasekhar, Y. Reznik, G. Takacs, D. M. Chen, S. S. Tsai, R. Grzeszczuk and B. Girod, 
``Quantization Schemes for the Compressed Histogram of Gradients descriptor,''
 in \emph{Proc. Computer Vision and Pattern Recognition (CVPR'10)}, 
San Francisco, CA, June 2010, pp.~33--40. 

\bibitem{Makhoul1975}
J. Makhoul, 
``Linear prediction: A tutorial review,'' 
in \emph{Proceedings of the IEEE}, vol. 63, no. 4, pp. 561-580, April 1975.

\bibitem{Savkin2025}
S.~Savkin,
``Quantization Methods for Matrix Multiplication and Efficient Transformers,''
M.S. thesis, Massachusetts Institute of Technology, Sep.~2025.
\url{https://people.lids.mit.edu/yp/homepage/data/theses/2025_MEng_Savkin.pdf}

\bibitem{OrdentlichPolyanskiy2026}
O.~Ordentlich and Y.~Polyanskiy,
``Optimal Quantization for Matrix Multiplication,''
\emph{IEEE Trans. Inf. Theory}, vol.~72, no.~3, pp.~1943--1972,
Mar.~2026.

\bibitem{fanstar}
Y.~Reznik,
\emph{fanstar: Faster closest-point algorithms for the $A_n^{*}$ lattices},
open-source software.
\url{https://github.com/yuriy-a-reznik/fanstar}

\end{thebibliography}
\end{document}